\newtheorem{definition}{Definition}
\newtheorem{lemma}{Lemma}
\newtheorem{example}{Example}
\newtheorem{proposition}{Proposition}
\begin{document}
%
\conferenceinfo{WOODSTOCK}{'97 El Paso, Texas USA}

\title{Towards a semantic and statistical selection of association rules}
%
%
%
%
%

\numberofauthors{4} 
%

\author{
%
%
\hspace*{-4cm}
\alignauthor
Slim Bouker\\
       \affaddr{Clermont university, Blaise Pascal University, LIMOS, BP 10448, F-63000 Clermont-Ferrand, France}\\
       \affaddr{CNRS, UMR 6158, LIMOS, F-63173 Aubi\`ere, France}\\
       \email{bouker@isima.fr}
\alignauthor
Rabie Saidi\\
 \affaddr{European Bioinformatics Institute}\\
       \affaddr{Hinxton, Cambridge, CB10 1SD, United Kingdom}\\
       \email{rsaidi@ebi.ac.uk}
\alignauthor
Sadok Ben Yahia\\
        \affaddr{University of Sciences of Tunis, Department of computer science, 1060 Tunis, Tunisia}\\
       \email{sadok.benyahia@fst.rnu.tn}
\and
\alignauthor
Engelbert Mephu Nguifo\\
       \affaddr{Clermont university, Blaise Pascal University, LIMOS, BP 10448, F-63000 Clermont-Ferrand, France}\\
       \affaddr{CNRS, UMR 6158, LIMOS, F-63173 Aubi\`ere, France}\\
       \email{mephu@isima.fr}
}

\maketitle
\begin{abstract}
The increasing growth of databases raises an urgent need for more accurate methods to better understand the stored data. In this scope, association rules were extensively used for the analysis and the comprehension of huge amounts of data. However, the number of generated rules is too large to be efficiently analyzed and explored in any further process. Association rules selection is a classical topic to address this issue, yet, new innovated approaches are required in order to provide help to decision makers. Hence, many interestingness measures have been defined to statistically evaluate and filter the association rules. However, these measures present two major problems. On the one hand, they do not allow eliminating irrelevant rules, on the other hand, their abundance leads to the heterogeneity of the evaluation results which leads to confusion in decision making. In this paper, we propose a two-winged approach to select statistically interesting and semantically incomparable rules. Our statistical selection helps discovering interesting association rules without favoring or excluding any measure. The semantic comparability helps to decide if the considered association rules are semantically related \emph{i.e} comparable. The outcomes of our experiments on real datasets show promising results in terms of reduction in the number of rules.
\end{abstract}




\keywords{Association rules selection, Interestingness measures, undominated rules, comparable rules.}

\vspace{-0.3cm}
\section{Introduction}

Mining association rules is one of the core tasks in data mining. Since its first formalization in \cite{Agra93}, the association rules research field has become very popular. Indeed, mining association rules provides an opportunity to extract relevant and valuable relationships between attributes in transaction databases. Currently, association rules are widely used in various areas of \emph{decision making} such as communication networks, market and risk management, inventory control, etc.
However, existing association rules algorithms produce an overwhelming number of rules \cite{Klemettinen94, Mann97}.
Hence, the decision maker is unable to determine the most interesting ones and is consequently unable to make decisions. In order to overcome this shortcoming, an efficient evaluation of rules has become a compelling need rather than being a rational choice. Several works have been devoted to the study of the interestingness of association rules \cite{Hilderman, Hilderman2, tan, Vail04}. As a consequence, a panoply of statistical measures have been proposed allowing the evaluation of rules from different sights. Although, the abundance of these measures ($\approx 60$) has raised another problem for the decision maker. In fact, a given rule considered relevant according to one measure may be irrelevant with respect to another one. Hence, the output of evaluation vary from one measure to another and may even be contradictory. This has led to a trend of works that focus on proposing approaches to assist the user in selecting the measures that best fit the decision scope.

In this scope, existing approaches can be classified into two main categories namely the expert-based approaches and the property-based ones. In the first category, different studies compared the ranking of rules by human experts to that yield by various measures. Then, they suggested choosing the measure that yields the closest one to the expert ranking \cite{Ohsaki, Tan02}. The results issued from these studies are highly related to specific datasets and experts. Thus, they cannot be taken as general conclusions. Moreover, in a real problem, it is not always possible to easily get an expert's ranking. As for the second category, the selection of measures is based on many properties reported in \cite{geng07}. Using properties facilitates a general and practical way to automatically identify interesting measures. Geng and Hamilton surveyed the interestingness of measures and summarized nine properties. This trend has been enriched by different other works \cite{Blan05, heravi, lenca08, mad} with an additional number of properties. Nevertheless, these properties are not standards \cite{lenca08}, since they do not guarantee selecting only one best measure. Indeed, a wide range of UCI\footnote{http://archive.ics.uci.edu/ml/} datasets were used to study the impact of different properties. The results show that no single measure can be elected as an obvious winner \cite{heravi}. Then, in the case of selecting many measures, the problem related to the variety of outputs, mentioned above, persists. In other words, the user cannot proceed towards a unique selection of rules.


Our contribution lies within this scope. In this paper, we introduce a novel approach that aims to discover interesting association rules without favoring or excluding any measure among the used measures. For this purpose, we integrate into the rule selection process, the \emph{skyline operator} \cite{Bor} whose fundamental principle relies on the notion of \emph{dominance}. The skyline operator is used to resolve mathematical and economics problems such as maximum vectors \cite{Kung}, Pareto set \cite{Mat} and multi-objective optimization \cite{Ste}. Besides, the skyline operator has received considerable attention in database community and several algorithms based on block nested loops \cite{Bor}, divide-and-conquer search \cite{Kossmann02} and index scanning \cite{Tan2001}, have been developed to meet skyline constraints in various computational domains. In our work, we use the skyline operator to detect the most interesting rules when considering several measures. The dominance relationship, which is the corner stone of the skyline operator, is applied on rules and can be presented as follows: a rule $r$ is said \emph{dominated} by another one $r'$, if for all used measures, $r$ is less relevant than $r'$. The former rule (\emph{i.e.,} $r$) is discarded from the result, not because it is not relevant for one of the measures, but because it is not interesting according to the combination of all measures. Even though the dominance relationship allows discovering interesting rules with respect to all the measures, it does not consider the semantic relationship between rules. In real world applications not all the rules are comparable, since different rules may belong to different semantic context. Hence, it would not be judicious that an undominated rule eliminates another rule while they are semantically independent. In our approach, we considered the semantic relationship constraints during the selection of undominated rules. It is worth mentioning that our method bypasses another non-trivial problem which is the threshold value specification.

The remainder of this paper is organized as follows. Section \ref{s2} gives brief definitions related to association rules and introduces the dominance relationship. We propose and detail our approach of rule selection in section \ref{s3}. Results of the experiments carried out on several datasets are reported in section \ref{s4}. Concluding points and avenues of future work are sketched in section \ref{s5}.


\section{Preliminary definitions and problem formulation} \label{s2}
In this section, we first recall basic definitions related to association rules. Then, we present these rules as numeric vectors within the same dimension after having been statistically evaluated by a set of measures. These vectors allow us to benefit from the concept of dominance and adapt it to select interesting rules.

\subsection{Association rules and interestingness measures}
Let $\mathcal{I}$ be a set of literals called \emph{items}, an itemset corresponds to a non null subset of $\mathcal{I}$.
These itemsets are gathered together in the set $\mathcal{L}$ : $\mathcal{L}$ = 2$^{\mathcal{I}}$$\setminus$$\emptyset$.
In a transactional dataset, each transaction contains an itemset of $\mathcal{L}$. Table \ref{example}(a) sketches a transactional dataset $\mathcal{D}$
where 10 transactions, denoted by $t_{1}$, . . . , $t_{10}$  described by 4 items denoted by $a$, $b$, $c$, $d$. The support of an itemset X, denoted $supp$(X), is the number of transactions containing $X$.

An association rule $r$ is a relation between itemsets of the form $r$: X$\rightarrow$Y where $X$ and $Y$ are itemsets,
and $X$$\cap$$Y$$=$$\emptyset$. Itemsets $X$ and $Y$ are called, respectively, premise and conclusion of $r$. The support of $r$ is
equal to the number of transactions containing both $X$ and $Y$, $supp$($r$)= $supp$(X$\cup$Y). As defined in \cite{Agra93}, given a typical
market-basket database, the association rule $r$: X$\rightarrow$Y means if someone buys the itemset X then he probably also buys Y.
The statical interesting of an association rules is evaluated using measures that are usually expressed as a function of support counts as presented in Table \ref{example}(c).


\begin{table}[htpd]


\begin{minipage}[t]{.1\linewidth}
    \begin{tabular}{|c|c c c c|}
  \hline
   & $a$ & $b$ & $c$ & $d$ \\\hline
   $t_{1}$& & &$\times$  & $\times$    \\
   $t_{2}$&$\times$  &  &  &  \\
   $t_{3}$& $\times$ &  &  & $\times$ \\
   $t_{4}$&  &  & $\times$ &  \\
   $t_{5}$&  & $\times$ &  & $\times$ \\
   $t_{6}$& $\times$ &  &  & $\times$ \\
   $t_{7}$&  &  & $\times$ &  \\
   $t_{8}$&  &  &  & $\times$ \\
   $t_{9}$&  & $\times$ & $\times$ & $\times$ \\
   $t_{10}$&  & $\times$ & $\times$ &  \\\hline
\end{tabular}
\hspace*{-0.18cm}$(a)\ A\ transaction\ dataset\ \mathcal{D}$
\end{minipage}
\hfill
\begin{minipage}[t]{0.55\linewidth}
    \begin{tabular}{|c||c c c|}
  \hline
       $Rule$    & $Freq$   & $Conf$   & $Pearl$  \\ \hline
    $r_{1}$: $a$$\rightarrow$$d$ & 0.20 & 0.66 & 0.02 \\
    $r_{2}$: $b$$\rightarrow$$c$ & 0.20 & 0.66 & 0.05 \\
    $r_{3}$: $b$$\rightarrow$$d$ & 0.20 & 0.66 & 0.02 \\
    $r_{4}$: $c$$\rightarrow$$b$ & 0.20 & 0.40 & 0.05 \\
    $r_{5}$: $c$$\rightarrow$$d$ & 0.20 & 0.40 & 0.10 \\
    $r_{6}$: $d$$\rightarrow$$a$ & 0.20 & 0.33 & 0.02 \\
    $r_{7}$: $d$$\rightarrow$$b$ & 0.20 & 0.33 & 0.01 \\
    $r_{8}$: $d$$\rightarrow$$c$ & 0.20 & 0.33 & 0.10 \\
    $r_{9}$: $b$$\rightarrow$$cd$ & 0.10 & 0.33 & 0.03 \\
    $r_{10}$: $c$$\rightarrow$$bd$ & 0.10 & 0.20 & 0.00 \\
    $r_{11}$: $d$$\rightarrow$$bc$ & 0.10 & 0.16 & 0.02 \\
    $r_{12}$: $bc$$\rightarrow$$d$ & 0.10 & 0.50 & 0.02 \\
    $r_{13}$: $bd$$\rightarrow$$c$ & 0.10 & 0.50 & 0.00 \\
    $r_{14}$: $cd$$\rightarrow$$b$ & 0.10 & 0.50 & 0.04 \\\hline
\end{tabular}
\hspace*{0.7cm}$(b)\ A\ table\ relation\  \Omega(\mathcal{R},\mathcal{M})$
\end{minipage}

\vspace*{0.7cm}
  \begin{minipage}[t]{.4\linewidth}
    \begin{tabular}{|c|c|c|}
  \hline
   $\textbf{Name}$& $\textbf{Definition}$ & $\textbf{Domain}$ \\ \hline
  $Frequency$& $\frac{supp(X\cup Y)}{|D|}$ & [0, 1] \\
        & &\\
   $Confidence$& $\frac{supp(X\cup Y)}{supp(X)}$ & [0, 1] \\
   & &\\
   $Pearl$& $\frac{supp(X)}{|D|}$ $\times$ $\mid$$\frac{supp(X\cup Y)}{supp(X)}$ $-$ $\frac{supp(Y)}{|D|}$$|$ & [0, 1]\\\hline
\end{tabular}
\hspace*{2.4cm}$(c)\ Some\ measures\ of\ \mathcal{M}$
\end{minipage}

\caption{Example of a dataset transaction and measures.}
\label{example}
\end{table}


\vspace*{0.4cm}
\subsection{Undominated association rules}

After mining association rules from a transactional dataset $\mathcal{D}$ (\emph{e.g.,} Table \ref{example}(a)), a set $\mathcal{R}$ of rules is obtained (\emph{e.g.,} Table \ref{example}(b) first column). Rules of $\mathcal{R}$ are evaluated with respect to a set $\mathcal{M}$ of measures (\emph{e.g.,} Table \ref{example}(c)) to form a relational table $\Omega$ (\emph{e.g.,} Table \ref{example}(b)). Formally, $\Omega$ = ($\mathcal{R}$,$\mathcal{M}$) with the set $\mathcal{M}$ = $\{$$m_{1}$,$\ .\ .\ .$, $m_{k}$$\}$ of measures as attributes and the set $\mathcal{R}$ = $\{$$r_{1}$,$\ .\ .\ .$, $r_{n}$$\}$ of rules as objects. We denote by $r$[$m$] the value of the measure $m$ for the rule $r$, $r$ $\in$ $\mathcal{R}$ and $m$ $\in$ $\mathcal{M}$. Since the evaluation of rules varies from a measure to another one, using several measures could lead to different outputs (relevant rules with respect to a measure). For example, $r_{1}$, $r_{2}$ and $r_{3}$ are the best rules with respect to the \emph{Confidence} measure whereas it is not the case according to the evaluation of \emph{Pearl} measure which favors $r_{5}$. This difference of evaluations is confusing for any process of rule selection.

Based on the above formulation of $\Omega$, we can utilize the notion of dominance between rules to address the selection of relevant ones. Before, formulating the dominance relationship between rules we need to define it at the level of measure values. To do that, we define value dominance as follows:

\begin{definition}(Value Dominance)
Given two values of a measure $m$ corresponding to two rules $r$ and $r'$, we say that $r$[$m$] dominates $r'$[$m$],
denoted by $r$[$m$]\ $\succeq$\ $r'$[$m$], iff $r$[$m$] is preferred to $r'$[$m$].
If $r$[$m$]\ $\succeq$\ $r'$[$m$] and $r$[$m$]\ $\neq$\ $r'$[$m$] then we say that $r$[$m$] strictly dominates $r'$[$m$], denoted
$r$[$m$]\ $\succ$\ $r'$[$m$].
\end{definition}

\vspace{-0.4cm}

\begin{definition}(Rule Dominance)
Given two rules $r$, $r'$ $\in$ $\mathcal{R}$, the dominance relationship according to the set of measures $\mathcal{M}$ is defined as follows:
\begin{itemize}
  \item[-] $r$ dominates $r'$, denoted $r$ $\succeq$ $r'$,\ \ iff\ \ $r$[$m$] $\succeq$ $r'$[$m$], $\forall$ $m$ $\in$ $\mathcal{M}$.\vspace*{-0.2cm}
  \item[-] If $r$ $\succeq$ $r'$ and $r'$ $\succeq$ $r$, \emph{i.e.,} $r$[$m$] $=$ $r'$[$m$], $\forall$ $m$ $\in$ $\mathcal{M}$ then $r$ and $r'$ are said \textbf{equivalent}, denoted $r$ $\equiv$ $r'$.\vspace*{-0.2cm}
  \item[-] If $r$ $\succeq$ $r'$ and $\exists$ $m$ $\in$ $\mathcal{M}$ such that $r'$[$m$] $\succ$ $r$[$m$] , then $r'$ is \textbf{strictly dominated} by $r$ and we note $r$ $\succ$ $r'$.
\end{itemize}
\end{definition}
It is easy to check that the strict dominance relationship fulfils the following properties:
\begin{itemize}
  \item[-]\textbf{irreflexive}: $r$ $\not\succ$ $r$,\ \textit{i.e,} $r$ $\succ$ $r$ is false for each $m$ $\in$ $\mathcal{M}$,\vspace*{-0.2cm}
  \item[-]\textbf{transitive}: $\forall$ $r$, $r'$ and $r''$ $\in$ $\mathcal{R}$, if $r$ $\succeq$ $r'$ and $r'$ $\succeq$ $r''$ then $r$ $\succeq$ $r''$.
\end{itemize}

\begin{example} Given the relation table $\Omega$ in Table \ref{example}(b), the rule $r_{2}$ strictly dominates $r_{1}$ since $r_{2}$[$Freq$] $\succeq$ $r_{1}$[$Freq$], $r_{2}$[$Conf$] $\succeq$ $r_{1}$[$Conf$] and $r_{2}$[$Pearl$] $\succ$ $r_{1}$[$Pearl$].
\end{example}
Whenever a rule $r$ dominates another one $r'$ with respect to $\mathcal{M}$, this means that $r$ is equivalent to or better than $r'$ for all measures. Hence, the dominance relationship allows comparing concurrently two rules with respect to all measures. Hence, it can be used to bypass the problem of difference of evaluations. Rules dominated by other ones (at least one), according to $\mathcal{M}$, are not relevant and have to be eliminated. The skyline operator for association rules formalizes this intuition.

\vspace*{-0.2cm}
\begin{definition}(Skyline operator)
The skyline of $\Omega$ over $\mathcal{M}$, denoted by $Sky_{M}$($\Omega$), is the set of rules from $\Omega$ defined as follows:
\vspace*{-0.3cm}
\begin{center}
$Sky_{\mathcal{M}}$($\Omega$) $=$ $\{$ $r$$\in$ $\mathcal{R}$ $\mid$ $\not\exists$ $r'$ $\in$ $\mathcal{R}$, $r'$ $\succ$ $r$$\}$
\end{center}
\end{definition}

In other words, the skyline of $\Omega$ is the set of undominated rules of $\mathcal{R}$ with respect to $\mathcal{M}$. For instance, from the relation table $\Omega$ in Table \ref{example}(b), $Sky_{M}$($\Omega$) $=$ $\{$$r_{2}$, $r_{5}$$\}$ since there is no rule in $\mathcal{R}$ dominating $r_{2}$ or $r_{5}$.


\subsection{Comparable association rules}

Mining the set of undominated rules allows eliminating irrelevant ones. Precisely, each undominated rule in $Sky_{M}$($\Omega$) removes all the rules it dominates. However, in real world applications not all the rules are comparable, since different rules may belong to different semantic context. Hence, it would not be judicious that an undominated rule eliminates another rule while they are semantically independent. Therefore, the dominance should not be the only criteria to define the rules to keep and those to eliminate. Another criterion must be introduced to ensure some semantic side in the selection process. This criterion would define a kind of semantic relationship between rules and restrict the use of dominance. Concretely, the dominance between two rules must be applied only if a semantic relationship exists between them. For this purpose, we define a semantic relationship called comparability.

\begin{definition}(comparability)
Two rules $r$: X$\rightarrow$Y and $r'$: $X'$$\rightarrow$$Y'$ are said comparable, we note $comp$($r$, $r'$) = true iff
($X$ $\subseteq$ $X'$ and $Y$ $\subseteq$ $Y'$) or ($X'$ $\subseteq$ $X$ and $Y'$ $\subseteq$ $Y$).
\end{definition}
For instance, from the relation table $\Omega$ in Table \ref{example}(b), we have $r_{1}$: $a$$\rightarrow$$d$ only dominated by $r_{2}$: $b$$\rightarrow$$c$ but the two rules are not comparable. Hence, $r_{1}$ should not be discarded. It is easy to check that the comparability relationship fulfils the following properties:
\begin{itemize}
  \item[-]\textbf{reflexive}: $\forall$ $r$ $\in$ $\mathcal{R}$, $comp$($r$, $r$) = $true$\vspace*{-0.2cm}
  \item[-]\textbf{non-transitive}: $\exists$ $r$, $r'$ and $r''$ $\in$ $\mathcal{R}$ such that $comp(r, r') = true$ and $comp$($r'$, $r''$) = $true$ but $comp$($r$, $r''$) = $false$.
\end{itemize}

\begin{definition}\label{comp}
Let $r$ and $r'$ be two rules. We said $r'$ is incomparable with $r$ iff $r$ $\succ$ $r'$ and comp($r$,$r'$)= false\\
All rules incomparable with $r$ are denoted by $Icomp$($r$); \vspace*{-0.2cm}
\begin{center}
$Icomp$($r$) $=$ $\{$ $r'$$\in$ $\mathcal{R}$ $\mid$  $r$ $\succ$ $r'$ $\wedge$ comp($r$,$r'$)= false$\}$
\end{center}
\end{definition}

The motivation behind the concept of comparability resides in the fact that, with reference to a given rule, some additional or missing information would yield a new rule with better or worse statistical interestingness. This amount of information, that we call sematic differential, could be simply additional/missing items in the premise of a rule and/or in its conclusion.
Given two comparable rules, we can make a one-way reading of the semantic differential from one rule to the other. As the syntax and the semantics change between the two rules, we can notice 2 cases:
\vspace*{-0.1cm}
\begin{enumerate}
  \item No rule is dominating the other. Hence, both of them are kept.\vspace*{-0.2cm}
  \item One of the two rules dominates the other. In this case, it would be suitable to remove the dominated rule as long as the dominant rule is not dominated by a third rule.
\end{enumerate}

The comparability relationship, we have defined, is one way to express semantics between rules. Obviously, there may exist several other ways to reveal different semantics \cite{Ping}, \cite{Roddick}. Generally, in this context inferring semantics between rules relies on their syntax comparison. For instances, semantically related rules may have a common itemset or a common premise or a common conclusion, etc.

\section{Representative association rules}\label{s3}

By mixing together the concepts of dominance and comparability, we propose a selection method that output inter-independent and statistically relevant rules. We call them representative rules.


\begin{definition}(Representative rules)
The representative association rules of $\Omega$ over $\mathcal{M}$, denoted by $\mathcal{RR}$, is the set of rules from $\Omega$ defined as follows:\\
$\mathcal{RR}$$_{\mathcal{M}}$($\Omega$) $=$ $\{$ $r$ $\in$ $\mathcal{R}$ $\mid$ $\not\exists$ an undominated rule $r'$, $r'$ $\succ$ $r$ $\wedge$ comp($r$,$r'$)= $true$$\}$
\end{definition}


\begin{proposition}
The following property holds:\\
$Sky_{\mathcal{M}}$($\Omega$) $\subseteq$ $\mathcal{RR}$$_{\mathcal{M}}$($\Omega$)
\end{proposition}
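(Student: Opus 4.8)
The plan is to argue the inclusion directly by comparing the two defining predicates. Both $Sky_{\mathcal{M}}(\Omega)$ and $\mathcal{RR}_{\mathcal{M}}(\Omega)$ are described as the set of rules $r$ for which a certain kind of strict dominator is forbidden; the only difference lies in how demanding the forbidden condition is. I would therefore fix an arbitrary rule $r \in Sky_{\mathcal{M}}(\Omega)$ and show that it automatically meets the membership condition of $\mathcal{RR}_{\mathcal{M}}(\Omega)$.

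First I would unpack the hypothesis. By the definition of the skyline operator, $r \in Sky_{\mathcal{M}}(\Omega)$ means that $\not\exists\, r' \in \mathcal{R}$ with $r' \succ r$; that is, no rule at all strictly dominates $r$. Next I would observe that the condition needed for membership in $\mathcal{RR}_{\mathcal{M}}(\Omega)$ is strictly weaker: to belong to $\mathcal{RR}_{\mathcal{M}}(\Omega)$, a rule $r$ must merely avoid being strictly dominated by an \emph{undominated} rule that is \emph{moreover comparable} with it, i.e. there is no undominated $r'$ satisfying simultaneously $r' \succ r$ and $comp(r,r') = true$. Since under the hypothesis no rule whatsoever strictly dominates $r$, a fortiori there is no undominated and comparable rule $r'$ with $r' \succ r$. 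The existential in the $\mathcal{RR}$ condition is thus vacuously unsatisfiable, so $r \in \mathcal{RR}_{\mathcal{M}}(\Omega)$. As $r$ was arbitrary, this yields $Sky_{\mathcal{M}}(\Omega) \subseteq \mathcal{RR}_{\mathcal{M}}(\Omega)$.

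I do not expect any genuine obstacle here: the inclusion is essentially a one-line logical consequence of the fact that the $\mathcal{RR}$ predicate imposes two extra restrictions (being undominated and comparable) on any candidate strict dominator of $r$, which can only shrink the pool of rules capable of excluding $r$ from the set. The single point deserving care is the correct reading of the quantifier structure in the definition of $\mathcal{RR}_{\mathcal{M}}(\Omega)$: the event being forbidden is the \emph{conjunction} $r' \succ r \wedge comp(r,r') = true$ ranging over undominated $r'$, and it is this conjunction as a whole—not either conjunct in isolation—that must fail. Once the stronger skyline condition holds, even the first conjunct already fails for every $r'$, so the whole conjunction fails trivially, completing the argument.
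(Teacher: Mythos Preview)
Your proof is correct. The paper does not actually supply a proof of this proposition---it simply records the immediate consequence ``Hence, any undominated rule is a representative rule''---and your argument spells out precisely the one-line logical implication the authors are tacitly relying on.
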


Hence, any undominated rule is a representative rule.

\subsection{$\mathcal{RR}$ construction}

To discover the representative association rules, a naive approach consist in comparing each rule with all other ones. However, association rules are often present in huge number which makes pairwise comparisons costly. In the following, we show how to overcome this problem by adopting the principle of approaches oriented divide-and-conquer search \cite{Kossmann02} used for answering queries in database applications. First, we introduce the
notion of reference rule.
\begin{definition}(Reference Rule)
A reference rule $r^{\perp}$ is a fictitious rule that dominates all the rules of $\mathcal{R}$. Formally:
$\forall$ $r$ $\in$ $\mathcal{R}$, $r^{\perp}$$\succeq$$r$.
\end{definition}
\begin{example} From the relational table $\Omega$ given in Table \ref{example}, we can consider $r^{\perp}$ as the fictitious rule such that for each measure $m$ $\in$ $\mathcal{M}$, $r^{\perp}[m]$ is the maximal value appearing in the active domain of $m$, \textit{i.e.,} $r^{\perp}$ $=$ $\langle$0.2, 0.66, 0.10$\rangle$. Hence, it does not exist any rule in $\mathcal{R}$ that dominates $r^{\perp}$.
\end{example}
In practice, measures are heterogenous and defined within different domains. For our purpose, $\mathcal{M}$ has to be normalized into $\widehat{\mathcal{M}}$ within one interval [$p$,$q$]. In other words, each measure $m$ $\in$ $\mathcal{M}$ must be normalized into $\widehat{m}$ $\in$ $\widehat{\mathcal{M}}$ within [$p$,$q$]. The normalization of a given measure $m$ is performed depending on its domain and the statistical distribution of its active domain. We recall that the active domain of a measure $m$ is the set of its values in $\Omega$. The normalization is a statistical problem which is beyond the scope of this paper.
It is worth mentioning, the normalization of a measure does not modify the domination relationship between two given values.

\begin{definition}(Degree of similarity)
Given two rules $r$, $r'$ $\in$ $\mathcal{R}$, the degree of similarity between $r$ and $r'$ with respect to $\widehat{\mathcal{M}}$
is defined as follows:

 \begin{displaymath}
 DegSim(r,r') = \frac{\sum_{i=1}^{k}\mid r[\widehat{m}_{i} ] - r'[\widehat{m}_{i} ]\mid}{k}
 \end{displaymath}

with $\mid x - y \mid$ is the absolute value of $(x-y)$, x and y $\in$ $[$$p$,$q$$]$ and $k$ $=$ $\mid \widehat{\mathcal{M}} \mid$.
\end{definition}

\begin{example} Let's consider our running example using the relation table $\Omega$ in Table \ref{example}(b). Since all measures are defined within the same domain [0,1], we can compute, without normalization, the degree of similarity between each rule and the reference rule given in the previous example. $DegSim$ ($r^{\perp}$,$r_{1}$) = 0.08, $DegSim$($r^{\perp}$,$r_{2}$) = 0.01, $DegSim$($r^{\perp}$,$r_{3}$) = 0.08, $DegSim$($r^{\perp}$,$r_{4}$) = 0.10, $DegSim$($r^{\perp}$,$r_{5}$) = 0.08, $DegSim$($r^{\perp}$,$r_{6}$) = 0.13, $DegSim$($r^{\perp}$,$r_{7}$) = 0.14, $DegSim$ ($r^{\perp}$,$r_{8}$) = 0.11, $DegSim$($r^{\perp}$,$r_{9}$) = 0.20, $DegSim$($r^{\perp}$,$r_{10}$) = 0.22, $DegSim$($r^{\perp}$,$r_{11}$) = 0.22, $DegSim$($r^{\perp}$,$r_{12}$) = 0.11, $DegSim$($r^{\perp}$,$r_{13}$) = 0.08, $DegSim$($r^{\perp}$,$r_{14}$) = 0.10.
\end{example}

After giving the necessary definitions (reference rule and degree of similarity), the following lemma gives a remedy to
the issue evoked in the beginning of section 3.1. Indeed, it offers a swifter solution rather than pairwise comparisons; to
find representative rules.

\begin{lemma}\label{l1}
Let $r$ $\in$ $\mathcal{R}$ be a rule having the minimal degree of similarity with respect to $r^{\perp}$, then $r$ $\in$ $\mathcal{RR}$$_{\mathcal{M}}$($\Omega$).
\end{lemma}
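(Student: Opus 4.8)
The plan is to prove something slightly stronger than required, namely that the minimizer $r$ lies in the skyline $Sky_{\mathcal{M}}(\Omega)$, and then to conclude immediately by invoking the Proposition, which states $Sky_{\mathcal{M}}(\Omega) \subseteq \mathcal{RR}_{\mathcal{M}}(\Omega)$. In this way the comparability side of the definition of $\mathcal{RR}_{\mathcal{M}}(\Omega)$ never has to be examined directly: once $r$ is shown to be undominated, it is automatically representative, because an undominated rule cannot be strictly dominated by any rule at all, let alone by a comparable undominated one.

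The first step is to rewrite $DegSim(r^{\perp},r)$ in a form free of absolute values. Since $r^{\perp}$ dominates every rule, we have $r^{\perp}[\widehat{m}_i] \succeq r[\widehat{m}_i]$ for each $i$; reading preference on the normalized scale as the usual numerical order (larger values are preferred, which is exactly the convention fixing $r^{\perp}$ as the vector of maximal active-domain values, and which normalization preserves), each term satisfies $r^{\perp}[\widehat{m}_i] - r[\widehat{m}_i] \geq 0$. Hence the absolute value keeps its sign and
\begin{displaymath}
DegSim(r^{\perp},r) = \frac{1}{k}\sum_{i=1}^{k} r^{\perp}[\widehat{m}_i] - \frac{1}{k}\sum_{i=1}^{k} r[\widehat{m}_i].
\end{displaymath}
The first sum is a constant independent of $r$, so minimizing $DegSim(r^{\perp},r)$ over $r \in \mathcal{R}$ is equivalent to maximizing the aggregate score $\sum_{i=1}^{k} r[\widehat{m}_i]$.

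Next I would argue by contradiction that this maximizer cannot be strictly dominated. Suppose a rule $r'$ satisfies $r' \succ r$. By the definition of strict rule dominance, $r'[\widehat{m}_i] \succeq r[\widehat{m}_i]$ for every $i$ and $r'[\widehat{m}_j] \succ r[\widehat{m}_j]$ for at least one $j$. Passing to the numerical order as above, this gives $r'[\widehat{m}_i] \geq r[\widehat{m}_i]$ for all $i$ with a strict inequality at $j$, so $\sum_{i=1}^{k} r'[\widehat{m}_i] > \sum_{i=1}^{k} r[\widehat{m}_i]$, contradicting the maximality of the aggregate score of $r$. Therefore no rule strictly dominates $r$, that is $r \in Sky_{\mathcal{M}}(\Omega)$, and the Proposition yields $r \in \mathcal{RR}_{\mathcal{M}}(\Omega)$.

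The only delicate point, and the step I expect to carry the real weight of the argument, is the passage from the abstract preference relation $\succeq$ to the numerical order $\geq$ that lets me strip the absolute value and then compare the two sums. This is justified here because $r^{\perp}$ is constructed from the maximal value of each measure's active domain and the text asserts that normalization does not alter the domination relationship; jointly these pin down the convention that, on the normalized scale, a dominating value is a numerically larger value. Granting that convention, the remainder is the elementary observation that a coordinate-wise larger vector, strictly larger in one coordinate, has a strictly larger coordinate sum.
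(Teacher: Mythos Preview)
Your argument is correct and rests on the same core observation as the paper's proof: if some $r'$ strictly dominates $r$, then each coordinate of $r'$ is at least as close to $r^{\perp}$ as the corresponding coordinate of $r$, with one strictly closer, so $DegSim(r^{\perp},r') < DegSim(r^{\perp},r)$, contradicting minimality.

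The only structural difference is in the target. The paper argues directly against $r\notin\mathcal{RR}_{\mathcal{M}}(\Omega)$: that assumption already supplies an undominated $r'$ with $r'\succ r$ and $comp(r,r')=\text{true}$, and the strict dominance alone drives the contradiction (the comparability clause is never used). You instead prove the stronger fact $r\in Sky_{\mathcal{M}}(\Omega)$ and then invoke the inclusion $Sky_{\mathcal{M}}(\Omega)\subseteq\mathcal{RR}_{\mathcal{M}}(\Omega)$. Your route makes explicit why the comparability condition is irrelevant here, and your unpacking of the absolute value via $r^{\perp}\succeq r$ fills in a step the paper leaves implicit; the paper's version is simply the compressed form of the same computation.
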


\begin{proof}
Let $r$ $\in$ $\mathcal{R}$ be a rule having the minimal degree of similarity with respect to $r^{\perp}$ and we suppose that $r$ $\not\in$ $\mathcal{RR}$$_{\mathcal{M}}$($\Omega$), then there exists a rule $r'$ $\in$ $\mathcal{R}$ that strictly dominates $r$ and comp($r$,$r'$)=true, which means that $\forall$ $m$ $\in$ $\mathcal{M}$, $r'$[$m$] $\succeq$ $r$[$m$] and $\exists$ $m'$ $\in$ $\mathcal{M}$, $r'$[$m'$] $\succ$ $r$[$m'$]. Hence, we have $DegSim$($r^{\perp}$,$r'$) $<$ $DegSim$($r^{\perp}$,$r$). The latter inequality contradicts our hypothesis, since $r$ has the minimal degree of similarity with respect to $r^{\perp}$
\end{proof}

After identifying a representative rule $r$, the rules comparable and dominated by $r$ must be identified by comparing them to $r$. Na\"ively, $r$ must be compared to all rules in $\mathcal{R}$, yet we show in the following that we can even reduce the set of rules to be compared with $r$ into a subset of $\mathcal{R}$.

\begin{lemma}\label{l2}
Let $r$, $r'$, $r''$ $\in$ $\mathcal{R}$ with $r'$ $\in$ $Incomp$($r$).\\
If $r$ $\not\succ$ $r''$ then $r'$ $\not\succ$ $r''$
\end{lemma}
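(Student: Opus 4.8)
The plan is to prove the contrapositive of the implication. Instead of showing ``if $r \not\succ r''$ then $r' \not\succ r''$'', I would establish its logically equivalent form: if $r' \succ r''$ then $r \succ r''$. This reformulation is convenient because it lets me chain dominance relations forward, rather than reasoning about the \emph{absence} of a dominance.

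First I would unpack the hypothesis $r' \in Incomp(r)$. By Definition~\ref{comp}, membership in $Incomp(r)$ entails in particular that $r \succ r'$; note that the comparability clause $comp(r,r')=false$ is not actually needed for this lemma. Thus the proof starts from the two strict dominances $r \succ r'$ (given) and $r' \succ r''$ (the contrapositive assumption), and the goal is to conclude $r \succ r''$. In other words, the lemma reduces to transitivity of the strict dominance relation $\succ$.

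To carry this out, I would recover the two constituents of $r \succ r''$ separately. For the non-strict part, I would apply the transitivity property stated in the excerpt for $\succeq$ to $r \succeq r'$ and $r' \succeq r''$ (both implied by the corresponding strict dominances), yielding $r \succeq r''$. For the strict part, I would fix a measure $m_1$ witnessing $r[m_1] \succ r'[m_1]$; since $r' \succeq r''$ gives $r'[m_1] \succeq r''[m_1]$, composing these two value-level relations gives $r[m_1] \succ r''[m_1]$. Together, $r \succeq r''$ and the existence of a strictly dominating coordinate give $r \succ r''$, which is exactly the contrapositive statement.

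The step I expect to be the main obstacle is the value-level composition: $r[m_1] \succ r'[m_1]$ together with $r'[m_1] \succeq r''[m_1]$ must imply $r[m_1] \succ r''[m_1]$. This needs value dominance on each measure to behave as a total order, so that a strict relation followed by a weak one remains strict (equivalently, preference is transitive and antisymmetric). This is immediate for numeric measures, where ``preferred'' means larger, but since the excerpt axiomatizes transitivity only for $\succeq$ and not for $\succ$ at the measure level, I would make this monotonicity of the preference explicit before invoking it, and then the argument closes.
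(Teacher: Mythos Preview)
Your proof is correct, but it takes a different route from the paper's. The paper argues directly: from $r\not\succ r''$ it splits into the case $r\equiv r''$ (whence $r''\equiv r\succ r'$, so $r'\not\succ r''$) and the case $r\not\equiv r''$, where there must exist a measure $m$ with $r''[m]\succ r[m]\succeq r'[m]$, which already blocks $r'\succeq r''$. You instead prove the contrapositive, reducing the lemma to transitivity of strict dominance: from $r\succ r'$ and $r'\succ r''$ you chain $\succeq$ and then exhibit a strictly dominating coordinate of $r$ over $r''$. Your route is a bit more economical and makes it transparent that the lemma is nothing more than ``$\succ$ is transitive''; the paper's case analysis, on the other hand, pinpoints exactly \emph{which} coordinate prevents $r'$ from dominating $r''$, which is occasionally useful when one later wants to locate a witness measure. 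Both arguments tacitly rely on the value-level preference being a total order (so that a strict relation followed by a weak one stays strict, and so that $r\not\succeq r''$ yields a coordinate with $r''[m]\succ r[m]$); you flag this explicitly, which is appropriate.
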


\begin{proof}
$r'$ $\in$ $Incomp$($r$) implies that $r$ $\succ$ $r'$. If $r$ $\not\succ$ $r''$ then there are two cases:
\begin{enumerate}
  \item Either $r$ $\equiv$ $r''$, then obviously $r$ $\equiv$ $r''$ $\succ$ $r'$
  \item Or $r$ $\not\equiv$ $r''$, then $\exists$ $m$ $\in$ $\mathcal{M}$ such that $r''$[$m$] $\succ$ $r$[$m$] $\succeq$ $r'$[$m$]
\end{enumerate}
Thus, in both cases $r'$ cannot dominate $r''$.
\end{proof}

Lemma \ref{l2} states that any rule $r'$ belonging to $Incomp$($r$) cannot dominate a rule which does not belong to $Incomp$($r$). In consequence, if $r'$ is a representative rule, then it is useless to compare it with the rules which are not dominated by $r$. The next lemma allows us to characterize the set of candidate rules that can be eliminated by $r'$.

\begin{lemma}\label{l3}
Let $r$, $r'$, $r''$ $\in$ $\mathcal{R}$ with $r'$ $\in$ $Incomp$($r$).\\
If $r'$ $\succ$ $r''$ and comp($r$, $r''$) = $false$  then $r''$ $\in$ $Incomp$($r$)
\end{lemma}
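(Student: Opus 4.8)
The plan is to observe that the conclusion $r'' \in Incomp(r)$ unfolds, by Definition \ref{comp}, into two separate requirements: the strict dominance $r \succ r''$ and the incomparability $comp(r, r'') = false$. The second of these is handed to us verbatim as a hypothesis, so the entire task reduces to establishing $r \succ r''$. Everything else is ancillary; in particular, the comparability content of $r' \in Incomp(r)$ plays no role, and the only piece of $r' \in Incomp(r)$ I will actually use is the dominance part $r \succ r'$.

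First I would extract $r \succ r'$ from $r' \in Incomp(r)$ and set it alongside the hypothesis $r' \succ r''$, producing a chain of strict dominations $r \succ r' \succ r''$. The heart of the argument is to collapse this chain into $r \succ r''$. Concretely, I would unfold strict dominance at the level of measures: $r \succ r'$ gives $r[m] \succeq r'[m]$ for every $m \in \mathcal{M}$, while $r' \succ r''$ gives $r'[m] \succeq r''[m]$ for every $m$. Composing these two families of value-dominance inequalities via the value-level transitivity of $\succeq$ yields $r[m] \succeq r''[m]$ for all $m$, that is, $r \succeq r''$.

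It then remains to upgrade $\succeq$ to the strict relation $\succ$. For this I would use the witness measure $m'$ supplied by $r' \succ r''$, namely the $m'$ with $r'[m'] \succ r''[m']$. Since $r \succeq r'$ forces $r[m'] \succeq r'[m']$, the combination $r[m'] \succeq r'[m'] \succ r''[m']$ exhibits a measure at which $r$ is strictly better than $r''$. Together with $r \succeq r''$ this is precisely $r \succ r''$, and appending the given $comp(r, r'') = false$ closes the proof.

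The only genuine obstacle is that the preliminaries state transitivity only for the non-strict relation $\succeq$, whereas I need it for the strict relation $\succ$. The step deserving care is therefore the mixed composition: $a \succeq b$ together with $b \succ c$ should give $a \succ c$ at the value level, since this is exactly what lets the witness $m'$ survive the composition. I would justify it from the fact that value dominance $\succeq$ is a (total) order on the numeric measure values, so that $a \succeq b$ and $b \succ c$ imply $a \succ c$; once this value-level fact is secured, the rule-level argument above is routine.
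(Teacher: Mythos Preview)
Your proposal is correct and follows essentially the same route as the paper: extract $r \succ r'$ from $r' \in Incomp(r)$, compose with $r' \succ r''$ via transitivity of dominance to get $r \succ r''$, and then append the hypothesis $comp(r,r'') = false$ to conclude via Definition~\ref{comp}. The paper's proof simply invokes ``dominance transitivity'' in one line, whereas you take the extra (and justified) care of unfolding at the value level to verify that strict dominance is preserved along the chain; this added rigor is appropriate since the preliminaries only record transitivity for $\succeq$, but it does not change the underlying argument.
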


\begin{proof}
$r'$ $\in$ $Incomp$($r$) implies that $r$ $\succ$ $r'$. If $r'$ $\succ$ $r''$ then by the dominance transitivity $r'$ $\succ$ $r''$. Further, if $comp$($r$, $r''$) = $false$, then according to the definition \ref{comp} $r''$ $\in$ $Incomp$($r$).
\end{proof}

In what follows, we show how we can reduce the set of rules to be compared with an undominated rule.

\begin{definition}(undominated space)\label{def9}
Let $r$ be an undominated rule. If there exists a rule $r'$ which is not dominated by $r$ such that $r$ $\not\equiv$ $r'$, then there exists at least a measure $m$ $\in$ $\mathcal{M}$ such that $r'[m]$ $\succ$ $r[m]$. Since there exist $k$ measures in $\mathcal{M}$, then there are $k$ sets such that each one of them may contain rules not dominated by $r$. For each measure $m_{i}$ $\in$ $\mathcal{M}$, $i$=1,...,$k$, the corresponding set $s^{r}_{i}$ of rules which are not dominated by $r$ is defined as follows:


\begin{center}
$s^{r}_{i}$ = $\{$ $r'$ $\in$ $\mathcal{R}$ $\mid$ $r$ $\nsucc$ $r'$ and $r'$[$m_{i}$] $\succ$ $r$ [$m_{i}$]$\}$
\end{center}


These $k$ sets compose the undominated space of $r$, denoted $\mathcal{S}^{r}$=$\{$$s^{r}_{i}$$\}$, $i$=1,...,$k$.
\end{definition}

\begin{example} From our toy example presented in Table \ref{example}(b), for the undominated rule $r_{2}$, we have $s^{r_{2}}_{1}$ = $\emptyset$, $s^{r_{2}}_{2}$ = $\emptyset$ and $s^{r_{2}}_{3}$ = $\{$$r_{5}$$\}$. $s^{r_{2}}_{1}$ and $s^{r_{2}}_{2}$ are empty since there is no rule $r$ $\in$ $\mathcal{R}$ such that $r$[$m_{1}$] $\succ$ $r_{2}$[$m_{1}$] or $r$[$m_{2}$] $\succ$ $r_{2}$[$m_{2}$]. However, $s^{r_{2}}_{3}$ contains $r_{5}$ since $r_{5}$[$m_{3}$] $\succ$ $r_{2}$[$m_{3}$]. Following a similar reasoning, for the undominated rule $r_{5}$, we have $s^{r_{5}}_{1}$ = $\emptyset$, $s^{r_{5}}_{2}$ = $\{$$r_{1}$, $r_{2}$, $r_{3}$, $r_{12}$, $r_{13}$, $r_{14}$$\}$ and $s^{r_{5}}_{3}$ = $\emptyset$.
\end{example}

\begin{lemma}\label{l4}
Let $r$,$r'$ $\in$ $\mathcal{R}$ be two undominated rules and $s^{r}$ $\in$ $\mathcal{S}^{r}$. If $r'$ $\not\in$ $s^{r}$, then $\forall$ $r''$ $\in$ $s^{r}$, $r'$ $\not\succ$$r''$.
\end{lemma}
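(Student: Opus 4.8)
The plan is to argue directly from the definition of the undominated space, by decoding the two hypotheses ($r'$ undominated and $r' \notin s^{r}$) into a single inequality on the distinguished measure of $s^{r}$, and then chaining that with the defining inequality of $s^{r}$ to force a contradiction with the assumption $r' \succ r''$.

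First I would fix the index $i$ for which $s^{r} = s^{r}_{i}$, so that by Definition \ref{def9} we have $s^{r}_{i} = \{ \rho \in \mathcal{R} \mid r \not\succ \rho \text{ and } \rho[m_{i}] \succ r[m_{i}]\}$. The key first observation is that because $r'$ is undominated, no rule strictly dominates it; in particular $r \not\succ r'$ holds automatically. Consequently the hypothesis $r' \notin s^{r}_{i}$ can only fail through the second clause of the membership condition, so we must have $r'[m_{i}] \not\succ r[m_{i}]$. Since the values of a measure are totally ordered by preference (after normalization they are reals in $[p,q]$ with the larger value preferred), $r'[m_{i}] \not\succ r[m_{i}]$ yields $r[m_{i}] \succeq r'[m_{i}]$.

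Next I would bring in an arbitrary $r'' \in s^{r}_{i}$. By the definition of $s^{r}_{i}$ this gives $r''[m_{i}] \succ r[m_{i}]$. Combining this with $r[m_{i}] \succeq r'[m_{i}]$ from the previous step and using transitivity of the value order, I obtain $r''[m_{i}] \succ r'[m_{i}]$; that is, on the $i$-th measure $r''$ is strictly preferred to $r'$. Now suppose toward a contradiction that $r' \succ r''$. By the definition of rule dominance this requires $r'[m] \succeq r''[m]$ for every $m \in \mathcal{M}$, and in particular $r'[m_{i}] \succeq r''[m_{i}]$, which is incompatible with the strict preference $r''[m_{i}] \succ r'[m_{i}]$ just established. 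Hence $r' \not\succ r''$, and since $r''$ was arbitrary in $s^{r}$ the claim follows.

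I expect the only delicate step to be the decoding of $r' \notin s^{r}_{i}$: one must remember that the \emph{undominated} rule in the hypothesis is $r'$ (not $r$), which is exactly what kills the first clause $r \not\succ r'$ of the membership condition and thereby isolates the measure inequality $r[m_{i}] \succeq r'[m_{i}]$. Everything else is routine chaining within the total order on the values of $m_{i}$, so the proof hinges on this single reading of the negated membership together with the totality of value dominance.
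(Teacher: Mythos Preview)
Your proof is correct and follows essentially the same route as the paper's own argument: identify the distinguished measure $m_{i}$ of $s^{r}$, deduce $r[m_{i}]\succeq r'[m_{i}]$ from $r'\notin s^{r}$, combine with $r''[m_{i}]\succ r[m_{i}]$ by transitivity, and conclude $r'\not\succ r''$. If anything, you are more explicit than the paper about \emph{why} the undominated hypothesis on $r'$ is needed (it rules out the first disjunct of $r'\notin s^{r}_{i}$), a step the paper glosses over.
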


\begin{proof}
Given $r$, $r'$ $\in$ $\mathcal{R}$ two undominated rules and $s^{r}$ $\in$ $\mathcal{S}^{r}$ corresponding to a measure $m$ $\in$ $\mathcal{M}$. If $r'$ $\not\in$ $s^{r}$, then $r'$[$m$] $\nsucc$ $r$[$m$] which means that $r$[$m$] $\succeq$ $r'$[$m$] (1). Moreover, since $r''$ $\in$ $s^{r}$ then $r''$[$m$] $\succ$ $r$[$m$] (2). According to the dominance transitivity, (1) and (2) lead to $r''$[$m$] $\succ$ $r'$[$m$]. Hence, $r'$ $\not\succ$$r''$.
\end{proof}

\begin{lemma}\label{l5}
Let be $r$, $r'$ $\in$ $\mathcal{R}$ and $s^{r}$ $\in$ $\mathcal{S}^{r}$ such that $r$ is an undominated rule and $r'$ $\in$ $s^{r}$.
If $r'$ has the minimal degree of similarity with respect to $r^{\perp}$ among the rules in $s^{r}$, then $r'$ $\in$ $\mathcal{RR}$$_{\mathcal{M}}$($\Omega$).
\end{lemma}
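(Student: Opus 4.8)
The plan is to argue by contradiction, mirroring the proof of Lemma~\ref{l1} but carrying the argument out inside the restricted set $s^{r}$ instead of over the whole of $\mathcal{R}$. Write $s^{r} = s^{r}_{i}$ for the measure $m_{i} \in \mathcal{M}$ to which it corresponds, in the sense of Definition~\ref{def9}. Suppose, for contradiction, that $r' \not\in \mathcal{RR}_{\mathcal{M}}(\Omega)$. Then, by the definition of $\mathcal{RR}_{\mathcal{M}}(\Omega)$, there exists an undominated rule $r''$ with $r'' \succ r'$ and $comp(r', r'') = true$. The whole argument reduces to showing that this witness $r''$ must itself belong to $s^{r}$, which will clash with the assumed minimality of $DegSim(r^{\perp}, r')$ among the rules of $s^{r}$.

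The key step, which I expect to be the main obstacle, is to check that $r'' \in s^{r}_{i}$, i.e. that $r''$ satisfies the two defining conditions of Definition~\ref{def9}. The crucial observation is that $r''$ is undominated: since no rule strictly dominates it, we have in particular $r \not\succ r''$, which is the first condition. For the second, $r'' \succ r'$ gives $r''[m_{i}] \succeq r'[m_{i}]$, while $r' \in s^{r}_{i}$ gives $r'[m_{i}] \succ r[m_{i}]$; transitivity of the dominance order then yields $r''[m_{i}] \succ r[m_{i}]$. Both conditions holding, $r'' \in s^{r}$. It is worth noting that the comparability hypothesis $comp(r', r'') = true$ plays no role in this deduction; it is needed only to guarantee the existence of the witness $r''$ through the definition of $\mathcal{RR}_{\mathcal{M}}(\Omega)$, so the restriction of the minimality to $s^{r}$ is exactly what makes the shortcut work.

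To close the contradiction I would reuse the degree-of-similarity computation from Lemma~\ref{l1}. Because $r^{\perp}$ dominates every rule, each absolute value $\mid r^{\perp}[\widehat{m}_{j}] - \cdot \mid$ in $DegSim$ reduces to a plain difference; and since $r'' \succ r'$ means $r''[\widehat{m}_{j}] \succeq r'[\widehat{m}_{j}]$ for every $j$ with strict preference for at least one $j$ (normalization preserving dominance), one obtains the strict inequality $DegSim(r^{\perp}, r'') < DegSim(r^{\perp}, r')$. Since $r'' \in s^{r}$, this contradicts the hypothesis that $r'$ has the minimal degree of similarity with respect to $r^{\perp}$ among the rules of $s^{r}$. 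Therefore no such $r''$ exists and $r' \in \mathcal{RR}_{\mathcal{M}}(\Omega)$, as claimed.
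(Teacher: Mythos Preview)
Your proof is correct and follows essentially the same approach as the paper's: argue by contradiction, show that the dominating undominated witness $r''$ must lie in $s^{r}$, and then derive a contradiction via the strict inequality $DegSim(r^{\perp},r'') < DegSim(r^{\perp},r')$. The only stylistic difference is that you verify $r'' \in s^{r}$ directly from Definition~\ref{def9} (checking $r \not\succ r''$ and $r''[m_{i}] \succ r[m_{i}]$), whereas the paper invokes Lemma~\ref{l4} for this step; the underlying computation is identical.
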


\begin{proof}
Given $r$, $r'$ $\in$ $\mathcal{R}$ and $s^{r}$ $\in$ $\mathcal{S}^{r}$ such that $r'$ $\in$ $s^{r}$ and $r'$ has the minimal degree of similarity with $r^{\perp}$ among the rules in $s^{r}$. Suppose that $r'$ $\not\in$ $\mathcal{RR}$$_{\mathcal{M}}$($\Omega$), then it means that there exists a rule $r''$ $\in$ $\mathcal{R}$ such that $r''$$\succ$$r'$ and $comp$($r'$,$r''$)=$true$. According to lemma \ref{l4}, $r''$ must be in $s^{r}$ since any rule not belonging to $s^{r}$ cannot dominate $r'$. Moreover, $\forall$ $m$ $\in$ $\mathcal{M}$, $r''$[$m$] $\succeq$ $r'$[$m$] and $\exists$ $m'$ $\in$ $\mathcal{M}$, $r''$[$m'$] $\succ$ $r'$[$m'$]. Hence,  $DegSim$($r^{\perp}$,$r''$) $<$ $DegSim$($r^{\perp}$,$r'$) which contradicts our hypothesis since $r'$ has the minimal degree of similarity with $r^{\perp}$ in\ \ $s^{r}$.
\end{proof}

\subsection{Algorithm discovering the representative rules}

Based on the formalization, we propose the \textsc{RAR} algorithm allowing to discover representative rules. In \textsc{RAR}, we use the following variables for accumulating data during the execution of the algorithm:

\begin{itemize}
  \item[-] The variable $\mathcal{RR}$: is a variable initialized to empty set, it is used to keep track of the representative rules.\vspace*{-0.2cm}
  \item[-] The variable $Incomp$: is a variable that contains a subset of current candidate rules to be qualified as representative. This subset contains only rules which are incomparable with rules of $\mathcal{RR}$; it is initialized to empty set.\vspace*{-0.3cm}
  \item[-] The variable $S$: is a variable that contains all current sets covering the undominated space of all undominated rules; it is initialized to ${\mathcal{R}}$ since initially, all rules are considered undominated.
\end{itemize}

Informally, the algorithm works as follows:
\begin{itemize}
  \item[-] If the set of candidate rules $S$ and $Incomp$ are both empty, then the algorithm terminates and all representative rules are outputted through the variable $\mathcal{RR}$.
  \item[-] Otherwise, each rule $r$ in $\{S$$\cup$$Incomp\}$ might be a representative one. If $r$ has the minimal degree of similarity with the reference rule $r^{\perp}$, then $r$ is a representative rule and is added to $\mathcal{RR}$. Two cases have to be distinguished:
      \begin{enumerate}
        \item if $r$ belongs to incomparable set, then $r$ is no longer candidate and it is withdrawn from $Incomp$. After that, only the incomparable set is explored in order to delete rules which are comparable with $r$ and also dominated by $r$.
        \item otherwise (\emph{i.e.,} $r$ belongs to $S$), both the incomparable set set and the undominated space containing $r$ are explored. From the incomparable set, the rules comparable and dominated by $r$ will be removed. The undominated space containing $r$ is explored as follows:
            for each rule $r'$, in undominated space, is compared with $r$. Three cases have to be distinguished:
            \begin{enumerate}
              \item if $r'$ and $r$ are comparable and $r$ dominates $r'$, then $r'$ is no longer candidate and it is withdrawn from $S$.
              \item if $r'$ is incomparable with $r$, then $r'$ is still a candidate rule and it is added to the $Incomp$ set.
              \item otherwise, $r'$ is not dominated by $r$, \emph{i.e.,} $r'$ is still a candidate rule and it is added to the undominated subspace of $r$ according to definition \ref{def9}.
           \end{enumerate}
      Then, the undominated space containing $r$ is deleted from $S$ and the undominated space of $r$ is added to $S$. This process comes to an end when all candidates are handled.
      \end{enumerate}
\end{itemize}

\begin{algorithm}
\KwIn
{$\Omega$ = ($\mathcal{R}$, $\mathcal{M}$)}
\KwOut
{$\mathcal{RR}$: Representative rules}

\Begin
{
$\mathcal{RR}$ = $\emptyset$, $Incomp$ = $\emptyset$, $S$ = $\mathcal{R}$ \\

    \While{$S$ $\neq$ $\emptyset$ or $Incomp$ $\neq$ $\emptyset$}
        {
           $r^{*}$ a rule belonging to $S$ $\cup$ $Incomp$ having $min$($DegSim$($r$,$r^{\perp}$))\\

           add $r^{*}$ to $\mathcal{RR}$\\
           \ForEach{$r$ $\in$ $Incomp$}
            {
              \If{$r^{*}$ $\succ$ $r$ and comp($r^{*}$,$r$)}
              {
               remove $r$ from $Incomp$
              }
            }

            \If{$r^{*}$ $\in$ $Incomp$}
            {
              remove $r^{*}$ from $Incomp$
            }

            \Else
            {
              \ForEach{subspace $s$ $\in$ $S$ such that $r^{*}$ $\in$ $s$}
              {
                 \ForEach{$r$ $\in$ $s$}
                 {
                   \If{$r^{*}$ $\succ$ $r$}
                   {
                     \If{not(comp($r^{*}$, $r$))}
                     {
                      add $r$ to $Incomp$
                     }
                     remove $r$ from $S$
                   }
                   \Else
                   {
                     \ForAll {i such that $r$[$m_{i}$] $>$ $r^{*}$[$m_{i}$]}
                     {
                      add $r$ to the new subspace $s^{r^{*}}_{i}$
                     }
                   }
                 remove $s$ from $S$
                 }
              }
               add $\cup_{i}$$s^{r^{*}}_{i}$ to $S$
            }

        }

  \KwRet{$\mathcal{RR}$}
}

\caption{\textsc{RAR}}
  \label{Sky}
\end{algorithm}

\vspace*{0.4cm}

\section{Experimental study}\label{s4}

The aim of this experimental study is twofold. First, we show through extensive experiments that \textsc{RAR} provides interesting instance reduction compared to the initial set of rules. Second, we assess whether the number of measures has any uniform impact on the number of representative rules. These experiments were carried out on benchmark datasets taken from the UCI Machine Learning Repository. Table \ref{bases} summarizes the characteristics of these datasets. All the tests were performed on a 1.73 GHz Intel processor with Linux operating system and 2 GB of main memory.

\vspace{0.5cm}

\begin{table}[htpd]

 \begin{center}
    \begin{tabular}{|c|c c c|}
  \hline
       $Dataset$    &  $\sharp\ items$  & $\sharp\ transactions$   & $Avg.\ size$  \\
          &   &   & $of\ transactions$  \\
       \hline
    $Diabete$    & 75 & 3196 & 37 \\
    $Flare$    & 39 & 1389 & 10 \\
    $Iris$ & 119 & 8124 & 23 \\
    $Monks1$   & 19 & 124 & 7 \\
    $Monks2$   & 19 & 169 & 7 \\
    $Monks3$   & 19 & 122 & 7 \\
    $Nursery$  & 32 & 12960 & 9 \\
    $Zoo$      & 42 & 101 & 9 \\\hline
\end{tabular}
 \end{center}

\caption{Benchmark dataset characteristics.}
\label{bases}
\end{table}


\subsection{Reduction of number of rules}

\begin{table*}[!t]
  \centering
  \caption{Representative rules \emph{vs} undominated rules, TB rules and all rules}
    \begin{tabular}{|c|c|cc|cc|cc|cc|cc|cc|cc|}
    \hline
    \multicolumn{2}{|c|}{Datasets} & \multicolumn{2}{p{1.6cm}|}{$\{$Conf;Loev$\}$} & \multicolumn{2}{p{1.6cm}|}{$\{$Conf;Pearl$\}$} & \multicolumn{2}{p{1.8cm}|}{$\{$Conf;Recall$\}$} & \multicolumn{2}{p{1.8cm}|}{$\{$Conf;Zhang$\}$} & \multicolumn{2}{c|}{$\{$Conf;Pearl} & \multicolumn{2}{c|}{$\{$Conf;Loev} & \multicolumn{2}{c|}{$\{$Conf;Loev;Pearl} \\
    \multicolumn{2}{|c|}{(\emph{min$_{freq}$} \%)} & \multicolumn{2}{c|}{} & \multicolumn{2}{c|}{} & \multicolumn{2}{c|}{} & \multicolumn{2}{c|}{} & \multicolumn{2}{c|}{Recall$\}$} & \multicolumn{2}{c|}{Zhang$\}$} & \multicolumn{2}{c|}{Recall;Zhang$\}$} \\
    \hline
    \multirow{3}[0]{*}{Diabetes}& $\mathcal{RR}$ & \multicolumn{2}{r|}{5084} & \multicolumn{2}{r|}{619} & \multicolumn{2}{r|}{8512} & \multicolumn{2}{r|}{4931} & \multicolumn{2}{r|}{481} & \multicolumn{2}{r|}{1315} & \multicolumn{2}{r|}{1012} \\
    & $\mathcal{S}$$ky$-$\mathcal{R}$ & \multicolumn{2}{r|}{3411} & \multicolumn{2}{r|}{9} & \multicolumn{2}{r|}{6651} & \multicolumn{2}{r|}{2996} & \multicolumn{2}{r|}{9} & \multicolumn{2}{r|}{171} & \multicolumn{2}{r|}{171} \\
          & $\mathcal{TB}$-$\mathcal{R}$  & \multicolumn{2}{r|}{59314} & \multicolumn{2}{r|}{58124} & \multicolumn{2}{r|}{59206} & \multicolumn{2}{r|}{59309} & \multicolumn{2}{r|}{44813} & \multicolumn{2}{r|}{44602} & \multicolumn{2}{r|}{42126} \\
          (10,00) & $\mathcal{A}$-$\mathcal{R}$   & \multicolumn{2}{r|}{62132} & \multicolumn{2}{r|}{62132} & \multicolumn{2}{r|}{62132} & \multicolumn{2}{r|}{62132} & \multicolumn{2}{r|}{62132} & \multicolumn{2}{r|}{62132} & \multicolumn{2}{r|}{62132} \\\hline\hline
    \multirow{3}[0]{*}{Flare}& $\mathcal{RR}$ & \multicolumn{2}{r|}{6883} & \multicolumn{2}{r|}{502} & \multicolumn{2}{r|}{6993} & \multicolumn{2}{r|}{6817} & \multicolumn{2}{r|}{443} & \multicolumn{2}{r|}{269} & \multicolumn{2}{r|}{291} \\
    & $\mathcal{S}$$ky$-$\mathcal{R}$ & \multicolumn{2}{r|}{4975} & \multicolumn{2}{r|}{48} & \multicolumn{2}{r|}{4978} & \multicolumn{2}{r|}{4857} & \multicolumn{2}{r|}{48} & \multicolumn{2}{r|}{48} & \multicolumn{2}{r|}{48} \\
          & $\mathcal{TB}$-$\mathcal{R}$  & \multicolumn{2}{r|}{56163} & \multicolumn{2}{r|}{57101} & \multicolumn{2}{r|}{56451} & \multicolumn{2}{r|}{54524} & \multicolumn{2}{r|}{53197} & \multicolumn{2}{r|}{53116} & \multicolumn{2}{r|}{52819} \\
        (10,00)  & $\mathcal{A}$-$\mathcal{R}$   & \multicolumn{2}{r|}{57476} & \multicolumn{2}{r|}{57476} & \multicolumn{2}{r|}{57476} & \multicolumn{2}{r|}{57476} & \multicolumn{2}{r|}{57476} & \multicolumn{2}{r|}{57476} & \multicolumn{2}{r|}{57476} \\\hline\hline
    \multirow{3}[0]{*}{Iris}& $\mathcal{RR}$ & \multicolumn{2}{r|}{302} & \multicolumn{2}{r|}{265} & \multicolumn{2}{r|}{271} & \multicolumn{2}{r|}{262} & \multicolumn{2}{r|}{261} & \multicolumn{2}{r|}{264} & \multicolumn{2}{r|}{253} \\
    & $\mathcal{S}$$ky$-$\mathcal{R}$ & \multicolumn{2}{r|}{246} & \multicolumn{2}{r|}{246} & \multicolumn{2}{r|}{246} & \multicolumn{2}{r|}{246} & \multicolumn{2}{r|}{246} & \multicolumn{2}{r|}{246} & \multicolumn{2}{r|}{246} \\
          & $\mathcal{TB}$-$\mathcal{R}$   & \multicolumn{2}{r|}{440} & \multicolumn{2}{r|}{440} & \multicolumn{2}{r|}{440} & \multicolumn{2}{r|}{440} & \multicolumn{2}{r|}{440} & \multicolumn{2}{r|}{440} & \multicolumn{2}{r|}{440} \\
       (0,00)   & $\mathcal{A}$-$\mathcal{R}$   & \multicolumn{2}{r|}{440} & \multicolumn{2}{r|}{440} & \multicolumn{2}{r|}{440} & \multicolumn{2}{r|}{440} & \multicolumn{2}{r|}{440} & \multicolumn{2}{r|}{440} & \multicolumn{2}{r|}{440} \\\hline\hline
    \multirow{3}[0]{*}{Monks1}& $\mathcal{RR}$ & \multicolumn{2}{r|}{3883} & \multicolumn{2}{r|}{2106} & \multicolumn{2}{r|}{2891} & \multicolumn{2}{r|}{2797} & \multicolumn{2}{r|}{1003} & \multicolumn{2}{r|}{816} & \multicolumn{2}{r|}{694} \\
    & $\mathcal{S}$$ky$-$\mathcal{R}$ & \multicolumn{2}{r|}{768} & \multicolumn{2}{r|}{1} & \multicolumn{2}{r|}{788} & \multicolumn{2}{r|}{656} & \multicolumn{2}{r|}{1} & \multicolumn{2}{r|}{1} & \multicolumn{2}{r|}{1} \\
          & $\mathcal{TB}$-$\mathcal{R}$   & \multicolumn{2}{r|}{60417} & \multicolumn{2}{r|}{60692} & \multicolumn{2}{r|}{59418} & \multicolumn{2}{r|}{59452} & \multicolumn{2}{r|}{58904} & \multicolumn{2}{r|}{58811} & \multicolumn{2}{r|}{58327} \\
      (1,00)  & $\mathcal{A}$-$\mathcal{R}$  & \multicolumn{2}{r|}{62184} & \multicolumn{2}{r|}{62184} & \multicolumn{2}{r|}{62184} & \multicolumn{2}{r|}{62184} & \multicolumn{2}{r|}{62184} & \multicolumn{2}{r|}{62184} & \multicolumn{2}{r|}{62184} \\\hline\hline
    \multirow{3}[0]{*}{Monks2}& $\mathcal{RR}$ & \multicolumn{2}{r|}{414} & \multicolumn{2}{r|}{287} & \multicolumn{2}{r|}{503} & \multicolumn{2}{r|}{471} & \multicolumn{2}{r|}{215} & \multicolumn{2}{r|}{227} & \multicolumn{2}{r|}{223} \\
    & $\mathcal{S}$$ky$-$\mathcal{R}$ & \multicolumn{2}{r|}{279} & \multicolumn{2}{r|}{3} & \multicolumn{2}{r|}{215} & \multicolumn{2}{r|}{202} & \multicolumn{2}{r|}{3} & \multicolumn{2}{r|}{3} & \multicolumn{2}{r|}{3} \\
          & $\mathcal{TB}$-$\mathcal{R}$  & \multicolumn{2}{r|}{59611} & \multicolumn{2}{r|}{59702} & \multicolumn{2}{r|}{59568} & \multicolumn{2}{r|}{59544} & \multicolumn{2}{r|}{59103} & \multicolumn{2}{r|}{58917} & \multicolumn{2}{r|}{58662} \\
        (1,00)  & $\mathcal{A}$-$\mathcal{R}$   & \multicolumn{2}{r|}{59976} & \multicolumn{2}{r|}{59976} & \multicolumn{2}{r|}{59976} & \multicolumn{2}{r|}{59976} & \multicolumn{2}{r|}{59976} & \multicolumn{2}{r|}{59976} & \multicolumn{2}{r|}{59976} \\\hline\hline
    \multirow{3}[0]{*}{Monks3}& $\mathcal{RR}$ & \multicolumn{2}{r|}{3107} & \multicolumn{2}{r|}{773} & \multicolumn{2}{r|}{2094} & \multicolumn{2}{r|}{2362} & \multicolumn{2}{r|}{1266} & \multicolumn{2}{r|}{814} & \multicolumn{2}{r|}{458} \\
    & $\mathcal{S}$$ky$-$\mathcal{R}$ & \multicolumn{2}{r|}{1028} & \multicolumn{2}{r|}{2} & \multicolumn{2}{r|}{713} & \multicolumn{2}{r|}{781} & \multicolumn{2}{r|}{4} & \multicolumn{2}{r|}{2} & \multicolumn{2}{r|}{2} \\
          & $\mathcal{TB}$-$\mathcal{R}$   & \multicolumn{2}{r|}{58662} & \multicolumn{2}{r|}{58369} & \multicolumn{2}{r|}{57922} & \multicolumn{2}{r|}{58436} & \multicolumn{2}{r|}{57816} & \multicolumn{2}{r|}{57734} & \multicolumn{2}{r|}{56038} \\
       (1,00)  & $\mathcal{A}$-$\mathcal{R}$   & \multicolumn{2}{r|}{59304} & \multicolumn{2}{r|}{59304} & \multicolumn{2}{r|}{59304} & \multicolumn{2}{r|}{59304} & \multicolumn{2}{r|}{59304} & \multicolumn{2}{r|}{59304} & \multicolumn{2}{r|}{59304} \\\hline\hline
    \multirow{3}[0]{*}{Nursery}& $\mathcal{RR}$ & \multicolumn{2}{r|}{2883} & \multicolumn{2}{r|}{658} & \multicolumn{2}{r|}{1738} & \multicolumn{2}{r|}{1846} & \multicolumn{2}{r|}{573} & \multicolumn{2}{r|}{612} & \multicolumn{2}{r|}{554} \\
    & $\mathcal{S}$$ky$-$\mathcal{R}$ & \multicolumn{2}{r|}{497} & \multicolumn{2}{r|}{2} & \multicolumn{2}{r|}{304} & \multicolumn{2}{r|}{342} & \multicolumn{2}{r|}{8} & \multicolumn{2}{r|}{2} & \multicolumn{2}{r|}{2} \\
          & $\mathcal{TB}$-$\mathcal{R}$   & \multicolumn{2}{r|}{23872} & \multicolumn{2}{r|}{23901} & \multicolumn{2}{r|}{23875} & \multicolumn{2}{r|}{23417} & \multicolumn{2}{r|}{23176} & \multicolumn{2}{r|}{22806} & \multicolumn{2}{r|}{22139} \\
      (2,00)   & $\mathcal{A}$-$\mathcal{R}$   & \multicolumn{2}{r|}{25062} & \multicolumn{2}{r|}{25062} & \multicolumn{2}{r|}{25062} & \multicolumn{2}{r|}{25062} & \multicolumn{2}{r|}{25062} & \multicolumn{2}{r|}{25062} & \multicolumn{2}{r|}{25062} \\\hline\hline
    \multirow{3}[0]{*}{Zoo}& $\mathcal{RR}$ & \multicolumn{2}{r|}{1216} & \multicolumn{2}{r|}{493} & \multicolumn{2}{r|}{1161} & \multicolumn{2}{r|}{1124} & \multicolumn{2}{r|}{477} & \multicolumn{2}{r|}{462} & \multicolumn{2}{r|}{446} \\
    & $\mathcal{S}$$ky$-$\mathcal{R}$ & \multicolumn{2}{r|}{9784} & \multicolumn{2}{r|}{36} & \multicolumn{2}{r|}{9415} & \multicolumn{2}{r|}{9112} & \multicolumn{2}{r|}{36} & \multicolumn{2}{r|}{36} & \multicolumn{2}{r|}{36} \\
          & $\mathcal{TB}$-$\mathcal{R}$  & \multicolumn{2}{r|}{67991} & \multicolumn{2}{r|}{67305} & \multicolumn{2}{r|}{67872} & \multicolumn{2}{r|}{66146} & \multicolumn{2}{r|}{65328} & \multicolumn{2}{r|}{65116} & \multicolumn{2}{r|}{63926} \\
       (10,00)   & $\mathcal{A}$-$\mathcal{R}$   & \multicolumn{2}{r|}{71302} & \multicolumn{2}{r|}{71302} & \multicolumn{2}{r|}{71302} & \multicolumn{2}{r|}{71302} & \multicolumn{2}{r|}{71302} & \multicolumn{2}{r|}{71302} & \multicolumn{2}{r|}{71302} \\
    \hline
    \end{tabular}%
  \label{skynumber}%
\end{table*}%

\begin{table*}[!t]
  \centering
  \caption{Effectiveness of representative rules on UCI benchmarks}
    \begin{tabular}{|c|r|r|r|r|r|}
    \hline
    \multicolumn{1}{|c|}{\multirow{2}{*}{Measures}} & \multicolumn{1}{c|}{Minimal number } & \multicolumn{1}{c|}{Average number} & \multicolumn{1}{c|}{Maximal number} & \multicolumn{1}{c|}{Average number} & \multicolumn{1}{c|}{Average gain} \\
    \multicolumn{1}{|c|}{} & \multicolumn{1}{c|}{of $\mathcal{RR}$} & \multicolumn{1}{c|}{of $\mathcal{RR}$} & \multicolumn{1}{c|}{of $\mathcal{RR}$} & \multicolumn{1}{c|}{of $\mathcal{TB}$-$\mathcal{R}$} & \multicolumn{1}{c|}{of $\mathcal{RR}$} \\\hline
    $\{$Conf;Loev$\}$ &   302    &  2971,50     &   6883    &   48308,75    & 16,73 \\
    $\{$Conf;Pearl$\}$&    265   &   712,87    &   2106    &  40908,12     &  17,27\\
    $\{$Conf;Recall$\}$ &   271    &   2913,75    &   8512    &   48094,00    & 1,64 \\
    $\{$Conf;Zhang$\}$&   262    &   3020,37    &  6817     &   47658,50    & 29,31 \\
    $\{$Conf;Loev;Recall$\}$&    261   &  589,87     &   1266    &  45347,12     & 177,39 \\
    $\{$Conf;Pearl;Zhang$\}$ &   227    &   597,37    &    1315   &   45192,75    &  37,24\\
    $\{$Conf;Loev;Pearl;Recall;Zhang$\}$&  223     &   491,37    &   1012    &  44309,62     & 86,20 \\
    \hline
    \end{tabular}%
  \label{skyavrg}%
\end{table*}%

In the following, we show the ability of our approach to considerably reduce the oversized sets of rules generated from our experimental datasets. Our experiments batch aims to compare our approach to another one based on thresholds. For this purpose, we assign for each measure $m$ $\in$ $\mathcal{M}$, a threshold $\varepsilon_{m}$ such that $\varepsilon_{m}$ is the minimum value of the representative rules with respect to $m$, \emph{i.e.,} $\varepsilon_{m}$ $=$ $min$$\{$$r$[$m$] $|$ $r$ $\in$ $\mathcal{RR}$$_{\mathcal{M}}$($\Omega$). This ensures that all representative rules will be generated from the algorithm based on thresholds. For instance, in our running example (\textit{c.f.,} Table \ref{example}(b)), $\varepsilon_{freq}$ = 0.10, $\varepsilon_{conf}$ = 0.16 and $\varepsilon_{pearl}$ = 0.00. The set of resulting rules is called the threshold-based rules denoted by $\mathcal{TB}$-rules. These experiments have the benefit of quantifying the reduction of rules brought by \textsc{RAR} in the case where a user is able to perfectly specify thresholds for mining association rules algorithm based on thresholds. Hence, we compare the number of representative rules with respect to that of $\mathcal{TB}$-rules and the total number of association rules (denoted $\mathcal{A}$-$\mathcal{R}$). We considered a number of combinations of measures: Confidence \cite{Agra93}, Recall \cite{LavracFZ99}, Pearl \cite{Pearl88}, Loevinger \cite{Loev47}, Zhang \cite{Zhang01}.

For each set of measures, Table \ref{skynumber} compares the size of representative rules $\mathcal{RR}$ versus that of undominated rules (denoted by $\mathcal{S}$$ky$-$\mathcal{R}$), that of $\mathcal{TB}$-rules and that of all association rules. A major result is that the gain of representative rules is always important. Indeed the set of representative rules is very small compared to $\mathcal{TB}$-rules. This shows that even though using the the optimal threshold, the dimensionality problem of the huge number of rules remains. Table \ref{skyavrg} summarizes this result by sketching, for each set of measures, the minimal$/$average$/$maximal number of representative rules, the average number of $\mathcal{TB}$-rules and the average gain of representative rules versus the $\mathcal{TB}$-rules. The average gain rate is measured as follows:  $\frac{size\ of\ \mathcal{TB}-rules}{size\ of\ \mathcal{RR}}$.\\
A second observation is that the number of undominated rules is often extremely low, it even reaches less then 10. The explanation is that an undominated rule eliminates every rule it dominates, even if they are not comparable \emph{i.e} they are not semantically related.


\subsection{Impact of measure variation on the number of rules}

In what follows, we put the focus on the evolution of the representative rules cardinalities with respect to measure variation. Table \ref{skynumber} shows the effect of variation of $\mathcal{M}$ on representative rules, undominated rules, $\mathcal{TB}$-rules and all rules. We can notice that the number of all rules is obviously constant. In contrast, the number of $\mathcal{TB}$-rules is sensitive to the variation of cardinality of $\mathcal{M}$. Indeed, by adding each time a measure to $\mathcal{M}$, the number of $\mathcal{TB}$-rules decreases. However, the number of representative rules may decrease or increase. The decrease can be explained by the fact that an association rule can be undominated with respect to a set of measure $M_{1}$ and dominated with respect to $M_{2}$, such that $M_{1}$ $\subset$ $M_{2}$. For example, if two rules $r$ and $r'$ are equivalent and undominated with respect to $M_{1}$, there is a possibility that one of them dominates the other by considering one more measure. On the other hand, the increase can be explained by the fact that an association rule can be dominated with respect to $M_{1}$ and undominated with respect to $M_{2}$. For example, consider a rule $r$ which dominates another $r'$ with respect to $M_{1}$, by adding a measure $m$ to $M_{1}$, such that $r'$[$m$] $\succ$ $r$[$m$], then $r'$ is no longer dominated by $r$.

\section{Conclusion}\label{s5}

In this paper, we introduced an approach that addresses the problem of rule selection. This approach is not hindered by the abundance of measures which has been the issue of several works. These works have been devoted to measure selection in order to find one best measure, whereas the real issue lies in selecting rules to help with decision making. We proposed \textsc{RAR}, an algorithm to perform this task based on the dominance and comparability relationships. When using our algorithm, the user does not have to worry neither about the heterogeneity of measures nor about specifying thresholds. On the other hand, experimental results carried out on benchmark datasets showed important profits in terms of compactness of the representative rules.

An important direction for future work consists on setting up an approach aimed at discovering representative rules during the phase of the extraction rules which will improve the performance of the \textsc{RAR} algorithm. Another important task is to rank representative rules in order to answer to a personalized user query. Indeed, the user may ask to select top-k rules among representative rules. This selection cannot be performed unless a ranking is carried. Hence, it would be useful to set up a ranking process for the representative rules.

%

%


\bibliographystyle{abbrv}
\bibliography{KDD2}  

\begin{thebibliography}{10}

\bibitem{Agra93}
R.~Agrawal, T.~Imielinski, and A.~Swami.
\newblock Mining association rules between sets of items in large databases.
\newblock In {\em Proceedings of the ACM SIGMOD Intl. Conference on Management
  of Data, Washington, USA}, pages 207--216, June 1993.

\bibitem{Belohlavek11}
R.~Belohl{\'a}vek, D.~Grissa, S.~Guillaume, E.~M. Nguifo, and J.~Outrata.
\newblock Boolean factors as a means of clustering of interestingness measures
  of association rules.
\newblock In {\em CLA}, pages 207--222, 2011.

\bibitem{Blan05}
J.~Blanchard, F.~Guillet, R.~Gras, and H.~Briand.
\newblock Using information-theoretic measures to assess association rule
  interestingness.
\newblock In {\em ICDM}, pages 66--73, 2005.

\bibitem{Blan055}
J.~Blanchard, F.~Guillet, R.~Gras, and H.~Briand.
\newblock Using information-theoretic measures to assess association rule
  interestingness.
\newblock In {\em ICDM}, pages 66--73, 2005.

\bibitem{Bor}
S.~Borzsony, D.~Kossmann, and K.~Stocker.
\newblock The skyline operator.
\newblock In {\em Proceedings of the 17th International Conference on Data
  Engineering ICDE'2001}, pages 421--430, Heidelberg, Germany, 2001.

\bibitem{BoukerN12}
S.~Bouker, R.~Saidi, S.~B. Yahia, and E.~M. Nguifo.
\newblock Ranking and selecting association rules based on dominance
  relationship.
\newblock In {\em ICTAI}, pages 658--665, 2012.

\bibitem{Boulicaut10}
J.-F. Boulicaut and B.~Jeudy.
\newblock Constraint-based data mining.
\newblock In {\em Data Mining and Knowledge Discovery Handbook}, pages
  339--354. 2010.

\bibitem{Boulicaut110}
J.-F. Boulicaut and B.~Jeudy.
\newblock Constraint-based data mining.
\newblock In {\em Data Mining and Knowledge Discovery Handbook}, pages
  339--354. 2010.

\bibitem{Ping}
P.~Chen, R.~M. Verma, J.~C. Meininger, and W.~Chan.
\newblock Semantic analysis of association rules.
\newblock In {\em FLAIRS Conference}, pages 270--275, 2008.

\bibitem{fournier12}
P.~Fournier-Viger, C.-W. Wu, and V.~S. Tseng.
\newblock Mining top-k association rules.
\newblock In {\em Canadian Conference on AI}, pages 61--73, 2012.

\bibitem{geng07}
L.~Geng and H.~J. Hamilton.
\newblock Choosing the right lens: Finding what is interesting in data mining.
\newblock In {\em Quality Measures in Data Mining}, pages 3--24. 2007.

\bibitem{GengH077}
L.~Geng and H.~J. Hamilton.
\newblock Choosing the right lens: Finding what is interesting in data mining.
\newblock In {\em Quality Measures in Data Mining}, pages 3--24. 2007.

\bibitem{heravi}
M.~J. Heravi and O.~R. Za\"{\i}ane.
\newblock A study on interestingness measures for associative classifiers.
\newblock In {\em SAC}, pages 1039--1046, 2010.

\bibitem{Hilderman}
R.~J. Hilderman and H.~J. Hamilton.
\newblock Evaluation of interestingness measures for ranking discovered
  knowledge.
\newblock In {\em PAKDD}, pages 247--259, 2001.

\bibitem{Hilderman2}
R.~J. Hilderman and H.~J. Hamilton.
\newblock Measuring the interestingness of discovered knowledge: A principled
  approach.
\newblock {\em Intell. Data Anal.}, 7(4):347--382, 2003.

\bibitem{JeudyB02}
B.~Jeudy and J.-F. Boulicaut.
\newblock Constraint-based discovery and inductive queries: Application to
  association rule mining.
\newblock In {\em Pattern Detection and Discovery}, pages 110--124, 2002.

\bibitem{Klemettinen94}
M.~Klemettinen, H.~Mannila, P.~Ronkainen, H.~Toivonen, and A.~I. Verkamo.
\newblock Finding interesting rules from large sets of discovered association
  rules.
\newblock In {\em CIKM}, pages 401--407, 1994.

\bibitem{Kossmann02}
D.~Kossmann, F.~Ramsak, and S.~Rost.
\newblock Shooting stars in the sky: An online algorithm for skyline queries.
\newblock In {\em VLDB}, pages 275--286, 2002.

\bibitem{Kung}
H.~T. Kung, F.~Luccio, and F.~P. Preparata.
\newblock On finding the maxima of a set of vectors.
\newblock {\em J. ACM}, 22(4):469--476, 1975.

\bibitem{LavracFZ99}
N.~Lavrac, P.~A. Flach, and B.~Zupan.
\newblock Rule evaluation measures: A unifying view.
\newblock In {\em ILP}, pages 174--185, 1999.

\bibitem{lenca08}
P.~Lenca, P.~Meyer, B.~Vaillant, and S.~Lallich.
\newblock On selecting interestingness measures for association rules: User
  oriented description and multiple criteria decision aid.
\newblock {\em European Journal of Operational Research}, 184(2):610--626,
  2008.

\bibitem{Loev47}
J.~Loevinger.
\newblock {\em A systemic approach to the construction and evaluation of tests
  of ability Application.}
\newblock Psychological monographs, 1947.

\bibitem{mad}
M.~Maddouri and J.~Gammoudi.
\newblock On semantic properties of interestingness measures for extracting
  rules from data.
\newblock In {\em ICANNGA (1)}, pages 148--158, 2007.

\bibitem{Mann97}
H.~Mannila.
\newblock Methods and problems in data mining.
\newblock In {\em ICDT}, pages 41--55, 1997.

\bibitem{Mat}
J.~Matousek.
\newblock Computing dominances in \textsc{E}$^{n}$.
\newblock {\em Inf. Process. Lett.}, 38(5):277--278, 1991.

\bibitem{Ohsaki}
M.~Ohsaki, Y.~Sato, S.~Kitaguchi, and H.~Yokoi.
\newblock Comparison between objective interestingness measures and real human
  interest in medical data mining.
\newblock In R.~Orchard, C.~Yang, and M.~M.~Ali, editors, {\em Proceedings of
  the 17th international conference on Innovations in Applied Artificial
  Intelligence (IEA/AIE 2004). Volume 3029 of Lecture Notes in Artificial
  Intelligence., Springer-Verlag}, pages 1072--1081, 2004.

\bibitem{Ohsaki044}
M.~Ohsaki, Y.~Sato, S.~Kitaguchi, H.~Yokoi, and T.~Yamaguchi.
\newblock Comparison between objective interestingness measures and real human
  interest in medical data mining.
\newblock In {\em IEA/AIE}, pages 1072--1081, 2004.

\bibitem{Papadopoulos08}
A.~N. Papadopoulos, A.~Lyritsis, and Y.~Manolopoulos.
\newblock Skygraph: an algorithm for important subgraph discovery in relational
  graphs.
\newblock {\em Data Min. Knowl. Discov.}, 17(1):57--76, 2008.

\bibitem{Pearl88}
J.~Pearl.
\newblock On logic and probability.
\newblock {\em Computational Intelligence}, 4:99--103, 1988.

\bibitem{Roddick}
J.~F. Roddick and P.~Fule.
\newblock Semgram - integrating semantic graphs into association rule mining.
\newblock In P.~Christen, P.~J. Kennedy, J.~Li, I.~Kolyshkina, and G.~J.
  Williams, editors, {\em Sixth Australasian Data Mining Conference (AusDM
  2007)}, volume~70 of {\em CRPIT}, pages 129--137, Gold Coast, Australia,
  2007. ACS.

\bibitem{Soulet11}
A.~Soulet, C.~Ra\"{\i}ssi, M.~Plantevit, and B.~Cr{\'e}milleux.
\newblock Mining dominant patterns in the sky.
\newblock In {\em ICDM}, pages 655--664, 2011.

\bibitem{Ste}
R.~Steuer.
\newblock {\em Multiple Criteria Optimization: Theory, Computation and
  Application.}
\newblock (John Wiley, 546), 1986.

\bibitem{Tan2001}
K.-L. Tan, P.-K. Eng, and B.~C. Ooi.
\newblock Efficient progressive skyline computation.
\newblock In {\em VLDB}, pages 301--310, 2001.

\bibitem{tan}
P.~Tan and V.~Kumar.
\newblock Interestigness measures for association patterns: A perspective?
\newblock In {\em Proceedings of Workshop on Postprocessing in Machine Learning
  and Data Mining}, 2000.

\bibitem{Tan02}
P.-N. Tan, V.~Kumar, and J.~Srivastava.
\newblock Selecting the right interestingness measure for association patterns.
\newblock In {\em KDD}, pages 32--41, 2002.

\bibitem{Tan002}
P.-N. Tan, V.~Kumar, and J.~Srivastava.
\newblock Selecting the right interestingness measure for association patterns.
\newblock In {\em KDD}, pages 32--41, 2002.

\bibitem{Vail04}
B.~Vaillant, P.~Lenca, and S.~Lallich.
\newblock A clustering of interestingness measures.
\newblock In {\em Discovery Science}, pages 290--297, 2004.

\bibitem{Vaillant004}
B.~Vaillant, P.~Lenca, and S.~Lallich.
\newblock A clustering of interestingness measures.
\newblock In {\em Discovery Science}, pages 290--297, 2004.

\bibitem{Webb111}
G.~I. Webb.
\newblock Filtered-top-{\it k} association discovery.
\newblock {\em Wiley Interdisc. Rew.: Data Mining and Knowledge Discovery},
  1(3):183--192, 2011.

\bibitem{Sadok066}
S.~B. Yahia, T.~Hamrouni, and E.~M. Nguifo.
\newblock Frequent closed itemset based algorithms: a thorough structural and
  analytical survey.
\newblock {\em SIGKDD Explorations}, 8(1):93--104, 2006.

\bibitem{Zhang01}
T.~Zhang.
\newblock Association rules.
\newblock In {\em PAKDD}, pages 245--256, 2000.

\end{thebibliography}
%
%

\end{document}